\newcommand{\Z}{\mathbb{Z}}
\newcommand{\C}{\mathbb{C}}
\newcommand{\R}{\mathbb{R}}
\newcommand{\T}{\mathbb{T}}
\newcommand{\imi}{\mathsf{i}}
\newcommand{\gauss}[1]{\langle #1\rangle}
\newcommand{\fourier}[2]{\langle #1;#2\rangle}
\newcommand{\fad}[2]{\Phi_{#1}\left(#2\right)}
\newcommand{\fvm}[2]{W_\text{FV}\left(#2,#1\right)}
\newcommand{\gfvm}[2]{\operatorname{w}\!\left(#2,#1\right)}
\newcommand{\mfun}{\operatorname{M}}
\newcommand{\gri}{\operatorname{\mathfrak{p}}}
\newcommand{\wgfvm}[2]{\operatorname{\check w}\!\left(#2,#1\right)}
\newcommand{\ffvm}[2]{\operatorname{\tilde w}\!\left(#2,#1\right)}
\newcommand{\poc}[2]{\left(#1\right)_{#2}}
\newtheorem{proposition}{Proposition}
\theoremstyle{remark}
\newtheorem{definition}{Definition}
\title{The Yang--Baxter relation and gauge invariance}
\author{Rinat Kashaev}
\address{Section de math\'ematiques, Universit\'e de Gen\`eve,
2-4 rue du Li\`evre, 1211 Gen\`eve 4, Suisse\\}
\email{Rinat.Kashaev@unige.ch}
\date{October 11, 2015}
\thanks{Supported in part by Swiss National Science Foundation}
 \dedicatory{Dedicated to Professor Rodney Baxter on the occasion
     of his $75$th birthday}
\begin{document}

\begin{abstract} 
Starting from a quantum dilogarithm over a Pontryagin self-dual LCA group $A$, we construct an operator solution of the Yang--Baxter equation generalizing the solution of the Faddeev--Volkov model. Based on a specific choice of a subgroup $B\subset A$ and by using the Weil transformation, we also give a new non-operator interpretation of the Yang--Baxter relation. That allows us to construct a lattice QFT-model of IRF-type with gauge invariance under independent $B$-translations of local `spin' variables. 
\end{abstract}
\maketitle
\section{Introduction}
The recent progress in the construction of new  integrable models of lattice statistical physics and quantum field theory (QFT) is related with models where local `spins' take infinitely many values, and the Boltzmann weights of local interactions are given in terms of highly transcendental special functions, see e.g. \cite{MR1846786,MR2353414,MR2396233,MR3019403,MR3138832,MR2868616,MR2730793,MR3246845,MR3190461,MR2861174}. Some of those functions, particularly the one known under the name `quantum dilogarithm' \cite{MR1264393}, also find applications in quantum topology, see e.g. \cite{MR1607296,MR1865275,MR3262519,KashaevMarino2015}. 

The notion of a quantum dilogarithm associated with an arbitrary Pontryagin self-dual locally compact Abelian (LCA) group has been introduced in \cite{AndersenKashaev2014} in the context of quantum topology. In this paper, we discuss this notion from the perspective of lattice integrable systems and the Yang--Baxter equation.
The prototypical example for the definition of \cite{AndersenKashaev2014} is Faddeev's quantum dilogarithm $\fad{\mathsf{b}}{x}$ which is a meromorphic function on $\C$ defined by the formula
\begin{equation}\label{qdl}
\fad{\mathsf{b}}{x}
={\left(-q_\mathsf{b}e^{2 \pi \mathsf{b} x};q^2_\mathsf{b}\right)_\infty\over
\left(-\bar q_\mathsf{b} e^{2 \pi \mathsf{b}^{-1} x};\bar q^2_\mathsf{b}\right)_\infty},\quad q_\mathsf{b}\equiv e^{ \pi \imi \mathsf{b}^2}, \quad 
\bar q_\mathsf{b}\equiv e^{- \pi \imi \mathsf{b}^{-2}}, \quad 
\operatorname{Im}(\mathsf{b}^2) >0. 
\end{equation}
There are at least two integral representations, one given by Faddeev \cite{MR1345554}
\begin{equation}
\fad{\mathsf{b}}{x}=\exp\left( \int_{\mathbb{R}+\imi\epsilon}
\frac{e^{-2\imi xz}}{4\sinh(z\mathsf{b})\sinh(z\mathsf{b}^{-1})}
\frac{\operatorname{d}\!z}{z} \right)
\end{equation}
and another by Woronowicz \cite{MR1770545}
\begin{equation}
\fad{\mathsf{b}}{x}=\exp\left( {\imi\over 2\pi}\int_{\R}\log\left(e^{\mathsf{b}^{2}t+2\pi \mathsf{b}x}+1\right){\operatorname{d}\!t\over e^{t}+1}\right),
\end{equation}
which allow to analytically continue the function to real values of $\mathsf{b}$. 
Properties of $\fad{\mathsf{b}}{x}$
include  the inversion relation
\begin{equation}
\fad{\mathsf{b}}{x}\fad{\mathsf{b}}{-x}=\fad{\mathsf{b}}{0}^2e^{\pi\imi x^2}
\end{equation}
and the operator five term relation~\cite{Faddeev1994}
\begin{equation}\label{5term}
\fad{\mathsf{b}}{\mathbf{p}}\fad{\mathsf{b}}{\mathbf{q}}=\fad{\mathsf{b}}{\mathbf{q}}\fad{\mathsf{b}}{\mathbf{p}+\mathbf{q}}\fad{\mathsf{b}}{\mathbf{p}}
\end{equation}
where $\mathbf{p}$ and $\mathbf{q}$ are self-adjoint Heisenberg operators acting in the Hilbert space $L^2(\R)$ as differentiation and  multiplication operators respectively,
\begin{equation}
\langle x|\mathbf{p}={1\over 2\pi\imi}{\partial\over\partial x}\langle x|,\quad\langle x|\mathbf{q}=x\langle x|,
\end{equation}
where we use Dirac's bra-ket notation.
In the case where the  condition 
\begin{equation}
(1-|\mathsf{b}|)\operatorname{Im}(\mathsf{b})=0
\end{equation}
is satisfied, one has the unitarity property
\begin{equation}
|\fad{\mathsf{b}}{x}|=1,\quad \forall x\in\R.
\end{equation}

The generalization of Faddeev's quantum dilogarithm to the general setting of arbitrary self-dual  LCA groups  can be obtained by interpreting appropriately the Heisenberg operators in \eqref{5term}, for example, the way it was done in \cite{AndersenKashaev2014}. Equivalently, one can achieve that generalization by rewriting \eqref{5term} in a form not containing the differentiation operator $\mathbf{p}$. To this end, we define a unitary Fourier transformation operator $\mathbf{F}$ by the integral kernel (in Dirac's bra-ket notation)
\begin{equation}
\langle x|\mathbf{F}| y\rangle=e^{2\pi\imi xy}.
\end{equation}
It is easily verified the operator  equalities
\begin{equation}
\mathbf{F}\mathbf{q}=\mathbf{p}\mathbf{F},\quad e^{\pi\imi \mathbf{q}^2}(\mathbf{p}+\mathbf{q})=\mathbf{p}e^{\pi\imi \mathbf{q}^2}
\end{equation}
which imply that 
\begin{equation}
g(\mathbf{p})=\mathbf{F}g(\mathbf{q})\mathbf{F}^*,\quad g(\mathbf{p}+\mathbf{q})=e^{-\pi\imi \mathbf{q}^2}g(\mathbf{p})e^{\pi\imi \mathbf{q}^2}
\end{equation}
for any function $g\in \C^\R$. These relations  can be taken as definitions of their left hand sides in any Hilbert space as soon as the operators $\mathbf{F}$ and $g(\mathbf{q})$ are defined. We follow this line of reasoning in Section~\ref{sec2} where the Hilbert space is $L^2(A)$ with $A$ being an arbitrary self-dual LCA group.

\subsection{The Faddeev--Volkov weight function}
The Faddeev--Volkov model, as defined and studied in \cite{MR2353414,MR2396233}, is characterized by a weight function of the form
\begin{equation}
\fvm{y}{x}={\fad{\mathsf{b}}{x-y}\over \fad{\mathsf{b}}{x+y}}e^{2\pi\imi xy}
\end{equation}
which satisfies the operator Yang--Baxter relation
\begin{equation}
\fvm{x}{\mathbf{p}}\fvm{x+y}{\mathbf{q}}\fvm{y}{\mathbf{p}}=\fvm{y}{\mathbf{q}}\fvm{x+y}{\mathbf{p}}\fvm{x}{\mathbf{q}}.
\end{equation}
In this paper, we generalize the Faddeev--Volkov model to the setting of arbitrary self-dual LCA groups by using the generalized quantum dilogarithm function and give a novel way of interpreting the corresponding Yang--Baxter relation in terms of a lattice IRF-model of quantum field theory with gauge symmetry.

The paper is organized as follows. In Section~\ref{sec2}, we recall the definition of a quantum dilogarithm over a LCA group and prove the operator Yang--Baxter relation for the generalized Faddeev--Volkov weight function. In Section~\ref{sec3}, we discuss non-operator forms of the Yang--Baxter relation. Namely, we start by recalling the star-triangle relation and then proceed to the main result of the paper: construction of an IRF weight function having quasi-invariance property. Section~\ref{sec4} is devoted to examples of realization of the general  construction.

\section{Generalization to arbitrary self-dual LCA groups}\label{sec2}
Let $\T$ denote the complex circle group. For any LCA group $A$, we denote by  $\hat A$ its Pontryagin dual which is also a LCA group.  When $A$  is self-dual,  we assume that there exists an isomorphism $f\colon A\to \hat A$ such that there exists a function $\gauss{\cdot}\colon A\to \T$, called \emph{Gaussian exponential}, which is symmetric $\gauss{x}=\gauss{-x}$ and trivializes the group 2-cocycle $f(x)(y)$, i.e.
\begin{equation}
\fourier{x}{y}\equiv f(x)(y)={\gauss{x+y}\over\gauss{x}\gauss{y}}.
\end{equation}
We fix the normalized Haar measure $\operatorname{d}\!x$ on $A$ by the condition
\begin{equation}
\int_{A^2}\fourier{x}{y}\operatorname{d}\!x\operatorname{d}\!y=1
\end{equation}
and define a unitary operator $\mathbf{F}$ in the Hilbert space $L^2(A)$ corresponding to the Fourier transformation with the integral kernel (in Dirac's bra-ket notation)
\begin{equation}
\langle x|\mathbf{F}| y\rangle=\fourier{x}{y}.
\end{equation}
That means that
\begin{equation}
\langle x|\mathbf{F}f\rangle\equiv (\mathbf{F}f)(x)=\int_A \fourier{x}{y}f(y)\operatorname{d}\!y,\quad\forall f\in L^2(A).
\end{equation}
Additionally, to any function $g\in \C^A$, we associate operators $g(\mathbf{q})$, $g(\mathbf{p})$ and $g(\mathbf{p}+\mathbf{q})$, where the first one  acts diagonally by multiplication by $g$, i.e.
\begin{equation}\label{dac}
\langle x |g(\mathbf{q})f\rangle\equiv (g(\mathbf{q})f)(x)=g(x)f(x),\quad\forall f\in L^2(A),
\end{equation}
while the others are given by certain conjugations of the first one:
\begin{equation}\label{p.p+q}
g(\mathbf{p})\equiv\mathbf{F}g(\mathbf{q})\mathbf{F}^*,\quad g(\mathbf{p}+\mathbf{q})\equiv\gauss{\mathbf{q}}^*g(\mathbf{p})\gauss{\mathbf{q}}.
\end{equation}

\begin{definition}
 A \emph{quantum dilogarithm over a self-dual LCA group $A$} is a function $\phi\colon A\to \T$ which satisfies the relation
 \begin{equation}
\phi(x)\phi(-x)=\phi(0)^2\gauss{x},\quad \forall x\in A,
\end{equation}
and the operator relation in the Hilbert space $L^2(A)$ 
\begin{equation}\label{gpr}
\phi(\mathbf{p})\phi(\mathbf{q})=\phi(\mathbf{q})\phi(\mathbf{p}+\mathbf{q})\phi(\mathbf{p}).
\end{equation}
\end{definition}
\begin{proposition}
 The operator relation~\eqref{gpr} is equivalent to the integral relation
 \begin{equation}\label{ipent}
\phi(x)\phi(y)=\int_{A^3}Q_{x,y}^{u,v,w}\phi(w)\phi(v)\phi(u)\operatorname{d}(u,v,w)
\end{equation}
where
\begin{equation}
Q_{x,y}^{u,v,w}\equiv \gamma{\fourier{u-x}{w-y}\over\gauss{u-v+w}},\quad \gamma\equiv \int_A\gauss{z}\operatorname{d}\!z.
\end{equation}
\end{proposition}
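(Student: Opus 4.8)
The plan is to reduce the operator identity~\eqref{gpr} to a scalar identity by computing its matrix elements in a convenient pair of generalized bases, and then to recognize the resulting kernel as $Q_{x,y}^{u,v,w}$. Throughout I would use the position eigenkets $|x\rangle$, for which $\phi(\mathbf q)|x\rangle=\phi(x)|x\rangle$ by~\eqref{dac} and $\int_A|x\rangle\langle x|\operatorname{d}\!x$ is the identity, together with two further complete families adapted to the other two operators: the momentum kets $\mathbf F|x\rangle$, for which~\eqref{p.p+q} gives $\phi(\mathbf p)\mathbf F|x\rangle=\phi(x)\mathbf F|x\rangle$, and the kets $\gauss{\mathbf q}^*\mathbf F|x\rangle$, for which~\eqref{p.p+q} likewise gives $\phi(\mathbf p+\mathbf q)\gauss{\mathbf q}^*\mathbf F|x\rangle=\phi(x)\gauss{\mathbf q}^*\mathbf F|x\rangle$. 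Since the momentum bras $\langle x|\mathbf F^*$ and the position kets $|y\rangle$ each form complete families, the operator identity~\eqref{gpr} is equivalent to the equality of its matrix elements $\langle x|\mathbf F^*(\cdot)|y\rangle$ for all $x,y\in A$; these standard Dirac-notation manipulations are formal but admit a rigorous justification in $L^2(A)$.

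For the left-hand side I would note that $\langle x|\mathbf F^*$ is a left eigenbra of $\phi(\mathbf p)$ with eigenvalue $\phi(x)$, since $\langle x|\mathbf F^*\phi(\mathbf p)=\langle x|\phi(\mathbf q)\mathbf F^*=\phi(x)\langle x|\mathbf F^*$, while $|y\rangle$ is an eigenket of $\phi(\mathbf q)$ with eigenvalue $\phi(y)$; hence $\langle x|\mathbf F^*\phi(\mathbf p)\phi(\mathbf q)|y\rangle=\phi(x)\phi(y)\overline{\fourier{x}{y}}$. For the right-hand side I would insert the spectral resolutions of $\phi(\mathbf q)$, $\phi(\mathbf p+\mathbf q)$ and $\phi(\mathbf p)$ in the three bases above, with eigenvalue variables $w$, $v$ and $u$ respectively, turning the matrix element into $\int_{A^3}\phi(w)\phi(v)\phi(u)\,K\operatorname{d}(u,v,w)$, where $K$ is the product of four overlaps. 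Three of these are elementary: $\langle x|\mathbf F^*|w\rangle=\overline{\fourier{x}{w}}$, the position-with-$(\mathbf p+\mathbf q)$ overlap $\overline{\gauss{w}}\,\fourier{w}{v}$, and the momentum-with-position overlap $\overline{\fourier{u}{y}}$. The only nonelementary one is the overlap between a $(\mathbf p+\mathbf q)$-ket and a momentum ket, namely $\int_A\overline{\fourier{a}{v}}\,\gauss{a}\,\fourier{a}{u}\operatorname{d}\!a=\gamma/\gauss{u-v}$, which I would evaluate by writing $\gauss{a}\fourier{a}{u-v}=\gauss{a+u-v}/\gauss{u-v}$ and invoking translation invariance of the Haar measure.

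Collecting the four overlaps and dividing through by the common phase $\overline{\fourier{x}{y}}$ coming from the left-hand side, the asserted equivalence comes down to two phase identities. The one I expect to require the most care is $\overline{\gauss{w}}\,\fourier{w}{v}/\gauss{u-v}=\fourier{u}{w}/\gauss{u-v+w}$; it follows from the cocycle relation $\gauss{a+b}=\gauss{a}\gauss{b}\fourier{a}{b}$ together with the symmetry $\gauss{-x}=\gauss{x}$ and the bilinearity of $\fourier{\cdot}{\cdot}$, by expanding $\gauss{u-v+w}=\gauss{u-v}\gauss{w}\fourier{u-v}{w}$ and using $\fourier{w}{v}\fourier{u-v}{w}=\fourier{u}{w}$. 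The remaining pure phases recombine by bilinearity as $\fourier{u}{w}\,\overline{\fourier{x}{w}}\,\overline{\fourier{u}{y}}\,\fourier{x}{y}=\fourier{u-x}{w-y}$, so that $K$ becomes precisely $\gamma\,\fourier{u-x}{w-y}/\gauss{u-v+w}=Q_{x,y}^{u,v,w}$. Since equality of all matrix elements $\langle x|\mathbf F^*(\cdot)|y\rangle$ is equivalent to the operator identity, this establishes the equivalence of~\eqref{gpr} and~\eqref{ipent}.
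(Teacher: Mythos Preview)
Your proposal is correct and follows essentially the same approach as the paper: both compute the matrix elements $\langle x|\mathbf F^*(\cdot)|y\rangle$ of the two sides of~\eqref{gpr} and identify the resulting kernel with $Q_{x,y}^{u,v,w}$. The only cosmetic difference is bookkeeping: the paper inserts four position resolutions of the identity (yielding a four-fold integral with an extra variable $z$ whose Gaussian integration produces $\gamma$), whereas you insert three resolutions in the eigenbases adapted to $\phi(\mathbf q)$, $\phi(\mathbf p+\mathbf q)$ and $\phi(\mathbf p)$, so that $\gamma$ appears instead in the overlap $\langle v|\mathbf F^*\gauss{\mathbf q}\mathbf F|u\rangle=\gamma/\gauss{u-v}$; the subsequent phase manipulations are identical.
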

\begin{proof}
We multiply from left the both sides of \eqref{gpr} by $\mathbf{F}^*$ and calculate separately the integral kernels of the right and left hand sides. 
 By using definitions~\eqref{dac} and \eqref{p.p+q}, we have 
\begin{equation}\label{lhs1}
 \langle x|\mathbf{F}^*\phi(\mathbf{p})\phi(\mathbf{q})|y\rangle=\phi(x) \langle x|\mathbf{F}^*|y\rangle\phi(y)={\phi(x)\phi(y)\over\fourier{x}{y}}
 \end{equation}
 and
\begin{multline}\label{rhs1}
\langle x|\mathbf{F}^*\phi(\mathbf{q})\phi(\mathbf{p}+\mathbf{q})\phi(\mathbf{p})|y\rangle=\langle x|\mathbf{F}^*\phi(\mathbf{q})\gauss{\mathbf{q}}^*\mathbf{F}\phi(\mathbf{q})\mathbf{F}^*\gauss{\mathbf{q}}\mathbf{F}\phi(\mathbf{q})\mathbf{F}^*|y\rangle\\
=\int_{A^4}\langle x|\mathbf{F}^*|w\rangle\phi(w)\overline{\gauss{w}}\langle w|\mathbf{F}|v\rangle\phi(v)\langle v|\mathbf{F}^*|z\rangle\gauss{z}\langle z|\mathbf{F}|u\rangle\phi(u)\langle u|\mathbf{F}^*|y\rangle\operatorname{d}\!u\operatorname{d}\!v\operatorname{d}\!w\operatorname{d}\!z\\
=\int_{A^4}{\fourier{v-x}{w}\over\fourier{u}{y}\gauss{w}}\phi(w)\phi(v)\phi(u)\fourier{u-v}{z}\gauss{z}\operatorname{d}\!u\operatorname{d}\!v\operatorname{d}\!w\operatorname{d}\!z\\
=\gamma\int_{A^3}{\fourier{v-x}{w}\over\fourier{u}{y}\gauss{w}\gauss{u-v}}\phi(w)\phi(v)\phi(u)\operatorname{d}\!u\operatorname{d}\!v\operatorname{d}\!w\\
=\gamma\int_{A^3}{\fourier{u-x}{w-y}\over\fourier{x}{y}\gauss{u-v+w}}\phi(w)\phi(v)\phi(u)\operatorname{d}\!u\operatorname{d}\!v\operatorname{d}\!w.
\end{multline}
Equating \eqref{lhs1} to \eqref{rhs1}, we obtain the integral relation~\eqref{ipent}.
\end{proof}

\subsection{The generalized Faddeev--Volkov weight function}
Following \cite{MR2353414,MR2396233}, we define the generalized Faddeev--Volkov weight function by the formula
 \begin{equation}\label{gfvwf}
\gfvm{y}{x}\equiv{\phi(x-y)\over\phi(x+y)}\fourier{x}{y}={\phi(x-y)\phi(-x-y)\over\phi(0)^{2}\gauss{x}\gauss{y}},
\end{equation}
where the second equality reveals the symmetry property
\begin{equation}
\gfvm{y}{x}=\gfvm{y}{-x}.
\end{equation}
\begin{proposition}
The generalized Faddeev--Volkov weight function \eqref{gfvwf} satisfies the following operator Yang--Baxter relation
\begin{equation}\label{gybe}
\gfvm{x}{\mathbf{p}}\gfvm{x+y}{\mathbf{q}}\gfvm{y}{\mathbf{p}}=\gfvm{y}{\mathbf{q}}\gfvm{x+y}{\mathbf{p}}\gfvm{x}{\mathbf{q}}.
\end{equation}
\end{proposition}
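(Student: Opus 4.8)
The plan is to deduce the operator Yang--Baxter relation \eqref{gybe} from the pentagon relation \eqref{gpr}, together with the inversion relation $\phi(x)\phi(-x)=\phi(0)^2\gauss{x}$ and the conjugation rules \eqref{p.p+q}. The first reduction is to eliminate the Fourier operator: since $\gfvm{\lambda}{\mathbf{p}}=\mathbf{F}\gfvm{\lambda}{\mathbf{q}}\mathbf{F}^*$ for every spectral parameter $\lambda$, I would conjugate the whole identity so as to express both sides entirely through the multiplication operators $\gfvm{\lambda}{\mathbf{q}}$, the single unitary $\mathbf{F}$, and the Gaussian operator $\gauss{\mathbf{q}}$. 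Each weight operator is then factored by means of \eqref{gfvwf}: writing $\gfvm{\lambda}{\mathbf{q}}=\phi(\mathbf{q}-\lambda)\phi(\mathbf{q}+\lambda)^{-1}\fourier{\mathbf{q}}{\lambda}$ and using the inversion relation to replace $\phi(\mathbf{q}+\lambda)^{-1}$ by $\phi(-\mathbf{q}-\lambda)$ up to a Gaussian, one exhibits $\gfvm{\lambda}{\mathbf{q}}$ (up to a scalar) as a product of two quantum-dilogarithm factors and a Gaussian, exactly as in the second form of \eqref{gfvwf}.

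The main computation is then a normal-ordering argument. I would render the spectral-parameter shifts $\mathbf{q}\pm\lambda$ as conjugations by the character operators $\fourier{\mathbf{p}}{\lambda}$, which act as translations and commute with $\mathbf{F}$ in a controlled way, and use \eqref{p.p+q} to pass between $\phi(\mathbf{p})$, $\phi(\mathbf{q})$ and $\phi(\mathbf{p}+\mathbf{q})$. With all factors expressed as quantum dilogarithms in $\mathbf{p}$, $\mathbf{q}$ and $\mathbf{p}+\mathbf{q}$, I would apply the pentagon relation \eqref{gpr} repeatedly so as to reorder the dilogarithm factors on the left-hand side of \eqref{gybe} into the order in which they appear on the right-hand side, moving each $\phi(\mathbf{p})$-type factor past the $\phi(\mathbf{q})$-type factors. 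The additivity of the spectral parameters ($x$, $x+y$, $y$ on both sides) is precisely what makes the two orderings reachable from one another by a fixed sequence of pentagon moves.

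The delicate point, and the step I expect to be the main obstacle, is the bookkeeping of the purely scalar Gaussian and character phase factors that accumulate at each conjugation and at each pentagon move; these must be shown to agree on the two sides. I would control them using the cocycle identity $\fourier{x}{y}=\gauss{x+y}/(\gauss{x}\gauss{y})$, the symmetry $\gauss{x}=\gauss{-x}$, and the chosen normalization of the Haar measure, the inversion relation being what allows the $\phi(\cdot)^{-1}$ factors produced along the way to recombine with $\phi(-\cdot)$ factors. As consistency checks I would use the reflection symmetry $\gfvm{\lambda}{\mathbf{q}}=\gfvm{\lambda}{-\mathbf{q}}$ recorded after \eqref{gfvwf} and the fact that, since $\phi$ takes values in $\T$, every operator in \eqref{gybe} is unitary. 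An alternative route, entirely parallel to the proof of the preceding proposition, avoids the operator gymnastics: sandwich \eqref{gybe} between position bra-kets $\langle x|\cdots|y\rangle$, insert resolutions of the identity together with the Fourier kernel $\langle x|\mathbf{F}|y\rangle=\fourier{x}{y}$, and reduce \eqref{gybe} to an integral identity that follows from \eqref{ipent}; this trades the algebraic bookkeeping for a multidimensional Gaussian integration but is conceptually more direct.
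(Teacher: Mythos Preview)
Your core strategy—factor each weight into quantum-dilogarithm pieces times a character, then repeatedly apply the pentagon relation \eqref{gpr} to reorder—is exactly the paper's approach. The execution in the paper, however, is more direct than your outline, and in a way that dissolves the ``delicate bookkeeping'' you anticipate.

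The paper does \emph{not} eliminate $\mathbf{F}$ or invoke the inversion relation. It keeps $\mathbf{p}$ and $\mathbf{q}$ as abstract Heisenberg operators and expands each factor in the \emph{first} form of \eqref{gfvwf}, keeping the inverses $\phi(\cdot)^{-1}$ rather than converting them via $\phi(x)\phi(-x)=\phi(0)^2\gauss{x}$. It then simply commutes the three character factors $\fourier{\mathbf{p}}{x}$, $\fourier{\mathbf{q}}{x+y}$, $\fourier{\mathbf{p}}{y}$ (respectively $\fourier{\mathbf{q}}{y}$, $\fourier{\mathbf{p}}{x+y}$, $\fourier{\mathbf{q}}{x}$) all the way to the right, using only the shift rules $\fourier{\mathbf{p}}{a}g(\mathbf{q})=g(\mathbf{q}+a)\fourier{\mathbf{p}}{a}$ and $\fourier{\mathbf{q}}{b}g(\mathbf{p})=g(\mathbf{p}-b)\fourier{\mathbf{q}}{b}$. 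A single application of the Weyl relation $\fourier{\mathbf{p}}{a}\fourier{\mathbf{q}}{b}=\fourier{a}{b}\fourier{\mathbf{q}}{b}\fourier{\mathbf{p}}{a}$ then shows the rightmost character tails on the two sides coincide, both equal to $\fourier{x}{x+y}\fourier{\mathbf{q}}{x+y}\fourier{\mathbf{p}}{x+y}$. No Gaussian exponentials ever appear, so there is nothing to track beyond these three characters.

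What remains is a word identity in six $\phi$-factors (three of them inverses), which the paper settles with three pentagon moves: first apply \eqref{gpr} in parallel to two adjacent $\phi(\mathbf{p}+\cdots)\phi(\mathbf{q}+\cdots)$ pairs on each side, cancel the resulting middle pair, conjugate to isolate two more adjacent pairs, and apply \eqref{gpr} once more; the resulting equality holds because the surviving $\phi(\mathbf{p}+\mathbf{q}+\cdots)$ factors commute. Your alternative integral-kernel route is viable—indeed the paper uses precisely that rewriting at the start of Section~\ref{sec3}—but for proving \eqref{gybe} itself the operator calculation is shorter.
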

\begin{proof}
We start by transforming the left hand side of \eqref{gybe}:
\begin{multline}
 \gfvm{x}{\mathbf{p}}\gfvm{x+y}{\mathbf{q}}\gfvm{y}{\mathbf{p}}\\={\phi(\mathbf{p}-x)\over\phi(\mathbf{p}+x)}\fourier{\mathbf{p}}{x}{\phi(\mathbf{q}-x-y)\over\phi(\mathbf{q}+x+y)}\fourier{\mathbf{q}}{x+y}{\phi(\mathbf{p}-y)\over\phi(\mathbf{p}+y)}\fourier{\mathbf{p}}{y}\\
 ={\phi(\mathbf{p}-x)\over\phi(\mathbf{p}+x)}{\phi(\mathbf{q}-y)\over\phi(\mathbf{q}+2x+y)}{\phi(\mathbf{p}-x-2y)\over\phi(\mathbf{p}-x)}\fourier{\mathbf{p}}{x}\fourier{\mathbf{q}}{x+y}\fourier{\mathbf{p}}{y}\\
 ={\phi(\mathbf{p}-x)\over\phi(\mathbf{p}+x)}{\phi(\mathbf{q}-y)\over\phi(\mathbf{q}+2x+y)}{\phi(\mathbf{p}-x-2y)\over\phi(\mathbf{p}-x)}\fourier{x}{x+y}\fourier{\mathbf{q}}{x+y}\fourier{\mathbf{p}}{x+y},
\end{multline}
and, similarly, the left hand side of \eqref{gybe}:
\begin{multline}
 \gfvm{y}{\mathbf{q}}\gfvm{x+y}{\mathbf{p}}\gfvm{x}{\mathbf{q}}\\={\phi(\mathbf{q}-y)\over\phi(\mathbf{q}+y)}\fourier{\mathbf{q}}{y}{\phi(\mathbf{p}-x-y)\over\phi(\mathbf{p}+x+y)}\fourier{\mathbf{p}}{x+y}{\phi(\mathbf{q}-x)\over\phi(\mathbf{q}+x)}\fourier{\mathbf{q}}{x}\\
 ={\phi(\mathbf{q}-y)\over\phi(\mathbf{q}+y)}{\phi(\mathbf{p}-x-2y)\over\phi(\mathbf{p}+x)}{\phi(\mathbf{q}+y)\over\phi(\mathbf{q}+2x+y)}\fourier{\mathbf{q}}{y}\fourier{\mathbf{p}}{x+y}\fourier{\mathbf{q}}{x}\\
={\phi(\mathbf{q}-y)\over\phi(\mathbf{q}+y)}{\phi(\mathbf{p}-x-2y)\over\phi(\mathbf{p}+x)}{\phi(\mathbf{q}+y)\over\phi(\mathbf{q}+2x+y)}\fourier{x}{x+y}\fourier{\mathbf{q}}{x+y}\fourier{\mathbf{p}}{x+y}.
\end{multline}
Comparing the obtained expressions, we conclude that \eqref{gybe} is equivalent to the equality
\begin{multline}
\phi^*(\mathbf{p}+x) \phi(\mathbf{p}-x)\phi(\mathbf{q}-y)\phi^*(\mathbf{q}+2x+y)\phi^*(\mathbf{p}-x)\phi(\mathbf{p}-x-2y)\\=
 \phi(\mathbf{q}-y)\phi^*(\mathbf{q}+y)\phi^*(\mathbf{p}+x)\phi(\mathbf{p}-x-2y)\phi(\mathbf{q}+y)\phi^*(\mathbf{q}+2x+y)
\end{multline}
where we have chosen specific orderings for the commuting terms. These orderings allow us to apply the five term relation to the second and third terms as well as to the forth and fifth terms simultaneously in both sides of the equality:
\begin{multline}
\phi^*(\mathbf{p}+x)\phi(\mathbf{q}-y)\phi(\mathbf{p}+\mathbf{q}-x-y) \phi(\mathbf{p}-x)\\
\times\phi^*(\mathbf{p}-x)\phi^*(\mathbf{p}+\mathbf{q}+x+y)\phi^*(\mathbf{q}+2x+y)\phi(\mathbf{p}-x-2y)\\=
 \phi(\mathbf{q}-y)\phi^*(\mathbf{p}+x)\phi^*(\mathbf{p}+\mathbf{q}+x+y)\phi^*(\mathbf{q}+y)\\
 \times\phi(\mathbf{q}+y)\phi(\mathbf{p}+\mathbf{q}-x-y)\phi(\mathbf{p}-x-2y)\phi^*(\mathbf{q}+2x+y)
\end{multline}
and the forth and fifth terms  cancel each other in both sides:
\begin{multline}
\phi^*(\mathbf{p}+x)\phi(\mathbf{q}-y){\phi(\mathbf{p}+\mathbf{q}-x-y)\over \phi(\mathbf{p}+\mathbf{q}+x+y)}\phi^*(\mathbf{q}+2x+y)\phi(\mathbf{p}-x-2y)\\=
 \phi(\mathbf{q}-y)\phi^*(\mathbf{p}+x){\phi(\mathbf{p}+\mathbf{q}-x-y)\over \phi(\mathbf{p}+\mathbf{q}+x+y)}\phi(\mathbf{p}-x-2y)\phi^*(\mathbf{q}+2x+y).
\end{multline}
To proceed further, we multiply the both sides by the inverse of the first term of the left hand side from the left and by the inverse of the last term of the right hand side from the right:
\begin{multline}
\phi(\mathbf{q}-y){\phi(\mathbf{p}+\mathbf{q}-x-y)\over \phi(\mathbf{p}+\mathbf{q}+x+y)}\phi^*(\mathbf{q}+2x+y)\phi(\mathbf{p}-x-2y)\phi(\mathbf{q}+2x+y)\\=
\phi(\mathbf{p}+x) \phi(\mathbf{q}-y)\phi^*(\mathbf{p}+x){\phi(\mathbf{p}+\mathbf{q}-x-y)\over \phi(\mathbf{p}+\mathbf{q}+x+y)}\phi(\mathbf{p}-x-2y)
\end{multline}
where we can apply again the five term relation to the last two terms of the left hand side and the first two terms of the right hand side:
\begin{multline}
\phi(\mathbf{q}-y){\phi(\mathbf{p}+\mathbf{q}-x-y)\over \phi(\mathbf{p}+\mathbf{q}+x+y)}\phi(\mathbf{p}+\mathbf{q}+x-y)\phi(\mathbf{p}-x-2y)\\=
 \phi(\mathbf{q}-y) \phi(\mathbf{p}+\mathbf{q}+x-y){\phi(\mathbf{p}+\mathbf{q}-x-y)\over \phi(\mathbf{p}+\mathbf{q}+x+y)}\phi(\mathbf{p}-x-2y).
\end{multline}
The obtained equality is trivially true because the middle terms are commuting.
\end{proof}
\section{Non-operator forms of the Yang--Baxter relation}\label{sec3}
The standard non-operator form of the Yang--Baxter relation~\eqref{gybe} is the \emph{star-triangle relation}. It can be obtained as follows.

We start by removing the operator $\mathbf{p}$ in \eqref{gybe}
\begin{equation}\label{cybe}
\mathbf{F}\gfvm{x}{\mathbf{q}}\mathbf{F}^*\gfvm{x+y}{\mathbf{q}}\mathbf{F}\gfvm{y}{\mathbf{q}}=\gfvm{y}{\mathbf{q}}\mathbf{F}\gfvm{x+y}{\mathbf{q}}\mathbf{F}^*\gfvm{x}{\mathbf{q}}\mathbf{F}
\end{equation}
and then equate the operator kernels
\begin{equation}
\langle u|\mathbf{F}\gfvm{x}{\mathbf{q}}\mathbf{F}^*\gfvm{x+y}{\mathbf{q}}\mathbf{F}|v\rangle\gfvm{y}{v}=\gfvm{y}{u}\langle u|\mathbf{F}\gfvm{x+y}{\mathbf{q}}\mathbf{F}^*\gfvm{x}{\mathbf{q}}\mathbf{F}|v\rangle.
\end{equation}
Inserting two decompositions of unity in each side, 
\begin{multline}
\int_{A^2}\langle u|\mathbf{F}|s\rangle\langle s|\mathbf{F}^*|t\rangle\langle t|\mathbf{F}|v\rangle\gfvm{x}{s}\gfvm{x+y}{t}\gfvm{y}{v}
\operatorname{d}(s,t)\\
=\int_{A^2}\gfvm{y}{u}\gfvm{x+y}{s}\gfvm{x}{t}\langle u|\mathbf{F}|s\rangle\langle s|\mathbf{F}^*|t\rangle\langle t|\mathbf{F}|v\rangle\operatorname{d}(s,t)
\end{multline}
we obtain an integral identity
\begin{multline}\label{int-id}
\int_{A^2}{\fourier{u}{s}\fourier{t}{v}\over\fourier{s}{t} }\gfvm{x}{s}\gfvm{x+y}{t}\gfvm{y}{v}
\operatorname{d}(s,t)\\
=\int_{A^2}{\fourier{u}{s}\fourier{t}{v}\over\fourier{s}{t} }\gfvm{y}{u}\gfvm{x+y}{s}\gfvm{x}{t}\operatorname{d}(s,t).
\end{multline}
Now, we absorb the $s$-integrations by introducing  a new weight function through the Fourier transformation
\begin{equation}
\ffvm{x}{y}=\int_A\fourier{y}{z}\gfvm{x}{z}\operatorname{d}\!z
\end{equation}
\begin{multline}
\int_{A}\fourier{t}{v}\ffvm{x}{u-t}\gfvm{x+y}{t}\gfvm{y}{v}
\operatorname{d}\!t\\
=\int_{A}\fourier{t}{v}\gfvm{y}{u}\ffvm{x+y}{u-t}\gfvm{x}{t}\operatorname{d}\!t,
\end{multline}
then multiply the both sides by $\fourier{-s}{v}$ and integrate over $v$:
\begin{multline}
\int_{A^2}\fourier{t-s}{v}\ffvm{x}{u-t}\gfvm{x+y}{t}\gfvm{y}{v}
\operatorname{d}(t,v)\\
=\int_{A^2}\fourier{t-s}{v}\gfvm{y}{u}\ffvm{x+y}{u-t}\gfvm{x}{t}\operatorname{d}\!(t,v)
\end{multline}
where, in the left hand side the $v$-integrations can be absorbed into Fourier transformed weight function, while in the right hand side, it produces the delta-function $\delta(t-s)$ which allows to remove the $t$-integration:
\begin{equation}
\int_{A}\ffvm{x}{u-t}\gfvm{x+y}{t}\ffvm{y}{t-s}
\operatorname{d}\!t
=\gfvm{y}{u}\ffvm{x+y}{u-s}\gfvm{x}{s}.
\end{equation}
This is the standard form of the star-triangle relation with the local  `spin'-variables taking their values in the group $A$.
\subsection{The Weil transformation and IRF-models}
Let $B\subset A$ be a closed subgroup such that
\begin{equation}
B=B^\perp\equiv\left\{x\in A\ |\ \fourier{x}{b}=1,\ \forall b\in B\right\}.
\end{equation}
We have a group homomorphism
\begin{equation}
\gri\colon A\to \hat B,\quad \gri(x)(b)=\fourier{x}{b}
\end{equation}
which induces a natural group isomorphism
\begin{equation}
A/B\simeq\hat B.
\end{equation}

We define a tempered distribution $\wgfvm{x}{s,t}\in\mathcal{S}'(A^3)$ as the \emph{Weil transformation}~\footnote{In the context of arbitrary LCA groups, that transformation in full generality first has been introduced and used by A.~Weil  in his proof of the Plancherel formula \cite{MR0005741}. In the particular case of $A=\R$ and $B=\Z$, the transformation subsequently was used in \cite{MR0039154,Zak1967}.} of the 
generalized Faddeev--Volkov weight function
\begin{equation}
\wgfvm{x}{s,t}\equiv\int_B\gfvm{x}{s+b}\fourier{t}{b}\operatorname{d}\!b
\end{equation}
which is quasi $B$-invariant in the first argument
\begin{equation}
\wgfvm{x}{s+b,t}=\fourier{-b}{t}\wgfvm{x}{s,t},\quad \forall (b,s,t,x)\in B\times A^3
\end{equation}
and $B$-invariant in the second argument 
\begin{equation}
\wgfvm{x}{s,t+b}=\wgfvm{x}{s,t},\quad \forall (b,s,t,x)\in B\times A^3.
\end{equation}
These properties allow us to consider  $\wgfvm{x}{s,t}$ as a section of a complex line bundle over the LCA group $(A/B)^2$.
Being a specific case of the Fourier transformation, the Weil transformation is invertible
with the inverse transformation given by the integral
\begin{equation}
\gfvm{x}{s}=\int_{A/B}\wgfvm{x}{s,t}\operatorname{d}\!t.
\end{equation}
Using the Weil transformation, we can transform the integral identity~\eqref{int-id} to a form in terms of the function $\wgfvm{x}{s,t}$ as follows. 
\begin{proposition}
The integral identity~\eqref{int-id} is equivalent to the integral identity
\begin{multline}\label{int-id1}
\fourier{u}{p}\int_{A/B}\fourier{v-p}{t}\wgfvm{x}{p,u-t}\wgfvm{x+y}{t,v-p}\wgfvm{y}{v,t-q}
\operatorname{d}\!t\\
=\fourier{v}{q}\int_{A/B}\fourier{u-q}{s}\wgfvm{y}{u,s-p}\wgfvm{x+y}{s,u-q}\wgfvm{x}{q,v-s}\operatorname{d}\!s.
\end{multline}
\end{proposition}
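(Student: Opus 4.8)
The plan is to prove the two identities equivalent by regarding \eqref{int-id1}, at fixed $(u,v)$, as an equality of two sections over $(p,q)\in (A/B)^2$, expanding it into its Fourier modes along $A/B$, and identifying the $(\beta,\gamma)$-th mode with a translate of \eqref{int-id}. Since $A/B\simeq\hat B$, the dual of $A/B$ is canonically $B$ and the relevant characters are $(p,q)\mapsto\fourier{p}{\beta}\fourier{q}{\gamma}$ with $(\beta,\gamma)\in B\times B$ (well defined because $\fourier{b'}{b}=1$ for $b,b'\in B$). Writing $F(u,v)$ and $G(u,v)$ for the left- and right-hand sides of \eqref{int-id}, the goal is to show that pairing each side of \eqref{int-id1} with $\fourier{p}{\beta}\fourier{q}{\gamma}$ and integrating over $(p,q)\in(A/B)^2$ yields $F(u+\beta,v+\gamma)$ on the left and $G(u+\beta,v+\gamma)$ on the right.

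To carry this out I would first substitute the definition $\wgfvm{x}{s,t}=\int_B\gfvm{x}{s+b}\fourier{t}{b}\operatorname{d}\!b$ into all three Weil transforms on each side of \eqref{int-id1}, turning each side into the $A/B$-integration (over $t$, resp.\ $s$) against three inner $B$-integrations. The integration over $(p,q)$ then acts only on exponential factors: using bilinearity and the symmetry $\fourier{x}{y}=\fourier{y}{x}$ one collects the entire $p$-dependence into a single character of $A/B$, and likewise for $q$, whereupon the orthogonality relation $\int_{A/B}\fourier{p}{\alpha}\operatorname{d}\!p=\delta_B(\alpha)$ — which is precisely the self-duality $B=B^\perp$ in Fourier form — collapses two of the inner $B$-integrations, forcing the corresponding Weil variables to equal $\beta$ and $\gamma$. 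This is exactly where the prefactors $\fourier{u}{p}$, $\fourier{v}{q}$ and the kernels $\fourier{v-p}{t}$, $\fourier{u-q}{s}$ are needed: they are what is required for the surviving phases to reassemble without leftover cocycle.

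After the two delta-collapses, the remaining $A/B$-integration and the one surviving $B$-integration recombine, via $\int_A=\int_{A/B}\int_B$ together with $\fourier{b}{b'}=1$ on $B$, into the two $A$-integrations of \eqref{int-id}; the labels $\beta,\gamma$ are absorbed into the arguments of the kernel and of the $y$-weight, producing $\gfvm{y}{v+\gamma}\int_{A^2}\fourier{u+\beta}{s}\fourier{t}{v+\gamma}\fourier{s}{t}^{-1}\gfvm{x}{s}\gfvm{x+y}{t}\operatorname{d}(s,t)=F(u+\beta,v+\gamma)$, and symmetrically $G(u+\beta,v+\gamma)$ on the right. The case $\beta=\gamma=0$ is the simplest instance and already recovers \eqref{int-id} from \eqref{int-id1}; the general mode is the identical computation with $u,v$ translated by $\beta,\gamma$.

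Finally, since the Fourier transform along $A/B$ is injective, \eqref{int-id1} holding for all $(p,q)$ is equivalent to the vanishing of all its Fourier modes, i.e.\ to $F(u+\beta,v+\gamma)=G(u+\beta,v+\gamma)$ for every $(\beta,\gamma)\in B\times B$; as $u$ ranges over $A$ and $\beta$ over $B$ the sum $u+\beta$ exhausts $A$, so this is exactly \eqref{int-id}. I expect the main obstacle to be the phase bookkeeping in the middle step — checking that after the two collapses every residual factor $\fourier{\cdot}{\cdot}$ combines into the kernel $\fourier{u+\beta}{s}\fourier{t}{v+\gamma}/\fourier{s}{t}$ with nothing left over — since that is the only place where the precise placement of the gauge factors and the hypothesis $B=B^\perp$ enter; once it is verified, the equivalence follows formally from Fourier inversion on $A/B$.
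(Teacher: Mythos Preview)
Your plan is sound and the phase bookkeeping will indeed close up, but the route is essentially the reverse of the paper's. The paper starts from \eqref{int-id} and proceeds constructively: it splits each $A$-integration as $\int_A=\int_{A/B}\int_B$ and absorbs the inner $B$-integrations into Weil transforms one at a time, and then performs the Weil transformation in the \emph{external} variables $u$ and $v$ (shift by $b\in B$, multiply by $\fourier{b}{-p}$ resp.\ $\fourier{b}{-q}$, integrate over $B$). On one side of the identity this produces the delta functions $\delta_{A/B}(s-p)$ and $\delta_{A/B}(t-q)$, which remove the remaining $A/B$-integrations and yield \eqref{int-id1} directly. Your approach instead unpacks \eqref{int-id1}: substitute the Weil transforms back, pair with characters of $A/B$ in $(p,q)$, and use orthogonality to reverse the construction and recover \eqref{int-id}. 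The two arguments are formal inverses of one another---the paper applies Weil transforms in $(u,v)$, you undo them by Fourier-analysing in the conjugate variables $(p,q)$---and your observation that the $(\beta,\gamma)$-th mode is \eqref{int-id} evaluated at $(u+\beta,v+\gamma)$ is precisely the paper's delta-collapse read backwards. Your framing makes the \emph{equivalence} more explicit via Fourier injectivity on $A/B$, while the paper's constructive direction has the advantage of showing how the prefactors $\fourier{u}{p}$, $\fourier{v}{q}$ and the kernels $\fourier{v-p}{t}$, $\fourier{u-q}{s}$ are forced rather than guessed.
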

\begin{proof}
The initial procedure of transformation consists in splitting each integration over $A$ as integration over $B$ followed by integration over $A/B$ and removing the integrations over $B$. 
By splitting the $s$-integrations in both sides of \eqref{int-id}, we have
\begin{multline}
\int_{A/B\times B\times A}\fourier{u-t}{s+b}\fourier{t}{v}\gfvm{x}{s+b}\gfvm{x+y}{t}\gfvm{y}{v}
\operatorname{d}(s,b,t)\\
=\int_{A/B\times B\times A}\fourier{u-t}{s+b}\fourier{t}{v}\gfvm{y}{u}\gfvm{x+y}{s+b}\gfvm{x}{t}\operatorname{d}(s,b,t)
\end{multline}
so that the $b$-integrations can be absorbed into the Weil transforms:
\begin{multline}
\int_{A/B\times A}\fourier{u}{s}\fourier{t}{v-s}\wgfvm{x}{s,u-t}\gfvm{x+y}{t}\gfvm{y}{v}
\operatorname{d}(s,t)\\
=\int_{A/B\times A}\fourier{u}{s}\fourier{t}{v-s}\gfvm{y}{u}\wgfvm{x+y}{s,u-t}\gfvm{x}{t}\operatorname{d}(s,t).
\end{multline}
Repeating the splitting  procedure for the  $t$-integrations, we have
\begin{multline}
\int_{A/B\times A/B\times B}\fourier{u}{s}\fourier{t+b}{v-s}\wgfvm{x}{s,u-t}\gfvm{x+y}{t+b}\gfvm{y}{v}
\operatorname{d}(s,t,b)\\
=\int_{A/B\times A/B\times B}\fourier{u}{s}\fourier{t+b}{v-s}\gfvm{y}{u}\wgfvm{x+y}{s,u-t}\gfvm{x}{t+b}\operatorname{d}(s,t,b)
\end{multline}
and again absorbing the  $b$-integrations into the Weil transformations:
\begin{multline}
\int_{A/B\times A/B}\fourier{u}{s}\fourier{t}{v-s}\wgfvm{x}{s,u-t}\wgfvm{x+y}{t,v-s}\gfvm{y}{v}
\operatorname{d}(s,t)\\
=\int_{A/B\times A/B}\fourier{u}{s}\fourier{t}{v-s}\gfvm{y}{u}\wgfvm{x+y}{s,u-t}\wgfvm{x}{t,v-s}\operatorname{d}(s,t).
\end{multline}
In the obtained integral equality, we now perform the Weil transformation over the variable $u$, 
i.e. we shift $u$ by $b$, $u\mapsto u+b$, then multiply both sides of the equality by $\fourier{b}{-p}$ and integrate over $b$:
\begin{multline}
\int_{A/B\times A/B\times B}\fourier{b}{-p}\fourier{u+b}{s}\fourier{t}{v-s}\\
\times\wgfvm{x}{s,u-t}\wgfvm{x+y}{t,v-s}\gfvm{y}{v}
\operatorname{d}(s,t,b)\\
=\int_{A/B\times A/B\times B}\fourier{b}{-p}\fourier{u+b}{s}\fourier{t}{v-s}\\
\times\gfvm{y}{u+b}\wgfvm{x+y}{s,u-t}\wgfvm{x}{t,v-s}\operatorname{d}(s,t,b).
\end{multline}
In the right hand side of this equality, the $b$-integration can be absorbed into the Weil transformation, while in the left hand side it produces the
delta-function $\delta_{A/B}(s-p)$ on the the group $A/B$ which removes the $s$-integration:
\begin{multline}
\fourier{u}{p}\int_{A/B}\fourier{t}{v-p}\wgfvm{x}{p,u-t}\wgfvm{x+y}{t,v-p}\gfvm{y}{v}
\operatorname{d}\!t\\
=\int_{A/B\times A/B}\fourier{u}{s}\fourier{t}{v-s}\wgfvm{y}{u,s-p}\wgfvm{x+y}{s,u-t}\wgfvm{x}{t,v-s}\operatorname{d}(s,t).
\end{multline}
We repeat the previous procedure for the variable $v$: shift $v$ by $b$, $u\mapsto v+b$, then multiply both sides of the equality by $\fourier{b}{-q}$ and integrate over $b$:
\begin{multline}
\fourier{u}{p}\int_{A/B\times B}\fourier{b}{-q}\fourier{t}{v-p+b}\\
\times\wgfvm{x}{p,u-t}\wgfvm{x+y}{t,v-p}\gfvm{y}{v+b}
\operatorname{d}(t,b)\\
=\int_{A/B\times A/B\times B}\fourier{b}{-q}\fourier{u}{s}\fourier{t}{v-s+b}\\
\times\wgfvm{y}{u,s-p}\wgfvm{x+y}{s,u-t}\wgfvm{x}{t,v-s}\operatorname{d}(s,t,b).
\end{multline}
This time, the $b$-integration in the left hand side is absorbed into the Weil transformation, while in the right hand side it produces the delta-function $\delta_{A/B}(t-q)$ which allows to remove the $t$-integration. The result is given by \eqref{int-id1}.
\end{proof}
\begin{proposition}
  Let $\chi\colon A^2\to\T$ be a bi-character such that
 \begin{equation}\label{char}
\fourier{x}{y}=\chi(x,y)\chi(y,x).
\end{equation}
Then the function $\mfun\colon A^3\to\C$ defined by
\begin{equation}\label{mfun}
\mfun(x,y,z)=\chi(x,y)\wgfvm{z}{x,y}
\end{equation}
is quasi $B$-invariant in the first two arguments
\begin{multline}
\mfun(x,y+b,z)=\chi(x,b)\mfun(x,y,z),\\
\mfun(x+b,y,z)=\bar\chi(y,b)\mfun(x,y,z),\quad \forall b\in B,
\end{multline}
and satisfies the following integral relation
\begin{multline}\label{irfm}
\int_{ A/B}\mfun(p,u-t,x)\mfun(t,v-p,x+y)\mfun(v,t-q,y)
\operatorname{d}\!t\\
=\int_{ A/B}\mfun(u,s-p,y)\mfun(s,u-q,x+y)\mfun(q,v-s,x)\operatorname{d}\!s
\end{multline}
\end{proposition}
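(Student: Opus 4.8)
The plan is to handle the two claims by quite different means: the quasi $B$-invariance of $\mfun$ is a one-line consequence of the transformation laws already recorded for the Weil transform, whereas the star-triangle-type identity \eqref{irfm} will be reduced to the already proven integral identity \eqref{int-id1} by absorbing the bi-character factors.

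First I would establish the invariance. By \eqref{mfun} we have $\mfun(x,y,z)=\chi(x,y)\wgfvm{z}{x,y}$, and the Weil transform is invariant under $B$-translations of its second argument and quasi-invariant in the first, picking up the factor $\fourier{-b}{y}$ under $x\mapsto x+b$. Feeding in the bilinearity relations $\chi(x,y+b)=\chi(x,y)\chi(x,b)$ and $\chi(x+b,y)=\chi(x,y)\chi(b,y)$, the first translation produces the prefactor $\chi(x,b)$ directly, while for the second the factor $\fourier{-b}{y}=\overline{\chi(b,y)\chi(y,b)}$ (by \eqref{char}) combines with the $\chi(b,y)$ coming from $\chi$ to leave exactly $\bar\chi(y,b)$. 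No difficulty arises here.

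The substance is in \eqref{irfm}. The key observation is that substituting $\mfun(a,b,c)=\chi(a,b)\wgfvm{c}{a,b}$ into \eqref{irfm} leaves the three products of Weil transforms \emph{identical} to those appearing in \eqref{int-id1}; only the attached $\chi$-factors differ from the Fourier kernels of \eqref{int-id1}, so the whole problem reduces to a bookkeeping of characters. I would expand every argument by bilinearity, e.g.\ $\chi(p,u-t)=\chi(p,u)\bar\chi(p,t)$, and separate each product into a part depending on the integration variable and a constant part. Using \eqref{char} together with the symmetry $\fourier{x}{y}=\fourier{y}{x}$ (itself inherited from the symmetry of the Gaussian exponential), the variable-dependent characters collapse on the left to $\fourier{v-p}{t}$ and on the right to $\fourier{u-q}{s}$ --- exactly the kernels standing inside the integrals of \eqref{int-id1}. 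The leftover constants are $\chi(p,u)\bar\chi(v,q)$ on the left and $\bar\chi(u,p)\chi(q,v)$ on the right.

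The single point that must be checked --- and the main, albeit elementary, obstacle --- is that these two constants are compatible with the external Fourier prefactors $\fourier{u}{p}$ and $\fourier{v}{q}$ appearing in \eqref{int-id1}. Here I would use $\fourier{u}{p}=\chi(u,p)\chi(p,u)$ to obtain $\chi(p,u)/\fourier{u}{p}=\bar\chi(u,p)$, and likewise $\chi(q,v)/\fourier{v}{q}=\bar\chi(v,q)$. Consequently both sides of \eqref{irfm}, once re-expressed through \eqref{int-id1}, acquire the \emph{same} overall constant $\bar\chi(u,p)\bar\chi(v,q)$, which divides out. Thus \eqref{irfm} is equivalent to \eqref{int-id1}, established in the preceding proposition, and the proof is complete.
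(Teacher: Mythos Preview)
Your proposal is correct and follows essentially the same route as the paper's proof: verify the quasi-invariance by combining the bi-character property of $\chi$ with the known transformation rules of $\wgfvm{z}{\cdot,\cdot}$, and derive \eqref{irfm} by substituting \eqref{mfun}, splitting each $\chi$-factor via bilinearity, recognizing the $t$- and $s$-dependent pieces as $\fourier{v-p}{t}$ and $\fourier{u-q}{s}$, and matching the remaining constants against the prefactors of \eqref{int-id1}. Your explicit check that both sides carry the common constant $\bar\chi(u,p)\bar\chi(v,q)$ is exactly what the paper's ``comparing \eqref{lhs} and \eqref{rhs}'' step amounts to.
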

\begin{proof}
First, we verify quasi-invariance properties:
\begin{multline}
\mfun(x,y+b,z)=\chi(x,y+b)\wgfvm{z}{x,y+b}\\=\chi(x,b)\chi(x,y)\wgfvm{z}{x,y}=\chi(x,b)\mfun(x,y,z),
\end{multline}
and
 \begin{multline}
\mfun(x+b,y,z)=\chi(x+b,y)\wgfvm{z}{x+b,y}\\=\chi(b,y)\fourier{-b}{y}\mfun(x,y,z)=\bar\chi(y,b)\mfun(x,y,z).
\end{multline}
Next, we calculate the left hand side of \eqref{irfm} by substituting \eqref{mfun} and using \eqref{char}:
\begin{multline}\label{lhs}
\int_{ A/B}\mfun(p,u-t,x)\mfun(t,v-p,x+y)\mfun(v,t-q,y)
\operatorname{d}\!t
=\int_{ A/B}\chi(p,u-t)\\
\times\chi(t,v-p)\chi(v,t-q)\wgfvm{x}{p,u-t}\wgfvm{x+y}{t,v-p}\wgfvm{y}{v,t-q}
\operatorname{d}\!t\\
=\chi(p,u)\bar\chi(v,q)\int_{ A/B}\fourier{v-p}{t}
\wgfvm{x}{p,u-t}\wgfvm{x+y}{t,v-p}\wgfvm{y}{v,t-q}
\operatorname{d}\!t
\end{multline}
and, similarly, for the right hand side
\begin{multline}\label{rhs}
\int_{ A/B}\mfun(u,s-p,y)\mfun(s,u-q,x+y)\mfun(q,v-s,x)\operatorname{d}\!s=
\int_{ A/B}\chi(u,s-p)\\
\times\chi(s,u-q)\chi(q,v-s)\wgfvm{y}{u,s-p}\wgfvm{x+y}{s,u-q}\wgfvm{x}{q,v-s}\operatorname{d}\!s\\
=
\bar\chi(u,p)\chi(q,v)\int_{ A/B}\fourier{u-q}{s}\wgfvm{y}{u,s-p}\wgfvm{x+y}{s,u-q}\wgfvm{x}{q,v-s}\operatorname{d}\!s.
\end{multline}
Comparing \eqref{lhs} and \eqref{rhs}, we obtain \eqref{int-id1}. 
\end{proof}
Introducing the graphical notation
\begin{equation}
\begin{tikzpicture}[baseline=-2,scale=.7,spin/.style={circle,draw=black!50,fill=black!10,thick}]
\coordinate (ne)  at (1,1);
\coordinate (nw)  at (-1,1);
\coordinate (se)  at (1,-1);
\coordinate (sw)  at (-1,-1);
\filldraw[fill=green!5,draw=green!50!black] (-1,0)--(0,1)--(1,0)--(0,-1)--cycle;
\node (w) at (-1,0)[spin]{$a$};\node (n) at (0,1)[spin]{$b$};\node (e) at (1,0)[spin]{$c$};\node (s) at (0,-1)[spin]{$d$};
\node[color=blue] at (-1.2,-1.2){$\alpha$};\node[color=blue] at (1.2,-1.2){$\beta$};
\draw[>=stealth,->,color=blue](sw)--(ne);
\draw[>=stealth,->,color=blue](se)--(nw);
\end{tikzpicture}
=\  \mfun(a-c,d-b,\alpha-\beta)
\end{equation}
we rewrite \eqref{irfm} as a Yang--Baxter relation of IRF-type:
\begin{equation}
 \int_{A/B}\begin{tikzpicture}[baseline=-2,scale=.6,spin/.style={circle,draw=black!50,fill=black!10,thick}]
\coordinate (n)  at (90:2);
\coordinate (s)  at (-90:2);
\coordinate (sw)  at (-150:2);
\coordinate (ne)  at (30:2);
\coordinate (nw)  at (150:2);
\coordinate (se)  at (-30:2);
\filldraw[fill=green!5,draw=green!50!black] (0:2)--(60:2)--(120:2)--(180:2)--(-120:2)--(-60:2)--cycle;
\draw[draw=green!50!black] (0:2)--(0,0)--(120:2)(0,0)--(-120:2);
\node (0) at (0:2)[spin]{$a$};\node (1) at (60:2)[spin]{$b$};\node (2) at (120:2)[spin]{$c$};\node (3) at (180:2)[spin]{$d$};
\node (4) at (-120:2)[spin]{$e$};\node (5) at (-60:2)[spin]{$f$}; \node (c) at (0,0)[spin]{$g$};
\node[color=blue] at (150:2.3){$\alpha$};\node[color=blue] at (-150:2.3){$\beta$};\node[color=blue] at (-90:2.3){$\gamma$};
\draw[>=stealth,->,color=blue](sw) .. controls (160:1.3) and (80:1.3) .. (ne);
\draw[>=stealth,->,color=blue](nw) .. controls (-160:1.3) and (-80:1.3) .. (se);
\draw[>=stealth,->,color=blue](s) .. controls (-30:1.3) and (30:1.3) .. (n);
\end{tikzpicture}\ \operatorname{d}\!g
\quad =\quad 
 \int_{A/B}\begin{tikzpicture}[baseline=-2,scale=.6,spin/.style={circle,draw=black!50,fill=black!10,thick}]
\coordinate (n)  at (90:2);
\coordinate (s)  at (-90:2);
\coordinate (sw)  at (-150:2);
\coordinate (ne)  at (30:2);
\coordinate (nw)  at (150:2);
\coordinate (se)  at (-30:2);
\filldraw[fill=green!5,draw=green!50!black] (0:2)--(60:2)--(120:2)--(180:2)--(-120:2)--(-60:2)--cycle;
\draw[draw=green!50!black] (180:2)--(0,0)--(60:2)(0,0)--(-60:2);
\node (0) at (0:2)[spin]{$a$};\node (1) at (60:2)[spin]{$b$};\node (2) at (120:2)[spin]{$c$};\node (3) at (180:2)[spin]{$d$};
\node (4) at (-120:2)[spin]{$e$};\node (5) at (-60:2)[spin]{$f$}; \node (c) at (0,0)[spin]{$h$};
\node[color=blue] at (150:2.3){$\alpha$};\node[color=blue] at (-150:2.3){$\beta$};\node[color=blue] at (-90:2.3){$\gamma$};
\draw[>=stealth,->,color=blue](sw) .. controls (-100:1.3) and (-20:1.3) .. (ne);
\draw[>=stealth,->,color=blue](nw) .. controls (100:1.3) and (20:1.3) .. (se);
\draw[>=stealth,->,color=blue](s) .. controls (-150:1.3) and (150:1.3) .. (n);
\end{tikzpicture}\ \operatorname{d}\!h
\end{equation}
with the identification of the variables
\begin{multline}
x=\alpha-\beta,\quad y=\beta-\gamma,\quad p=c-e,\quad t=g-f,\\ v=c-a,\quad u=d-f,\quad s=c-h,\quad q=b-f.
\end{multline}
The essential feature  of the obtained solution is that, in general, the IRF weight function is not $B$-invariant but only quasi $B$-invariant. That means that the corresponding lattice QFT model carries a gauge symmetry given by independent $B$-translations of local spin variables. However, this gauge symmetry trivializes in the case of self-dual LCA groups of the form $A=\hat B\times B$, see below Subsection~\ref{sec:ex}.
\subsection{The Weil transform of the generalized Faddeev--Volkov weight function}
Define the Weil transform of the quantum dilogarithm
\begin{equation}
\check\phi(x,y)\equiv\int_B\phi(x+b)\fourier{y}{b}\operatorname{d}\!b,\quad \phi(x)=\int_{A/B}\check\phi(x,y)\operatorname{d}\!y.
\end{equation}
\begin{proposition}
The Weil transform of the  generalized Faddeev--Volkov weight function can be written as an integral 
of a product of Weil transforms of the quantum dilogarithm through the following formula:
 \begin{equation}
\wgfvm{x}{s,t}={1\over \phi(0)^{2}\gauss{s}\gauss{x}}
\int_{A/B}\check\phi(s-x,y+t-s-\varepsilon)\check\phi(-s-x,y)\operatorname{d}\!y.
\end{equation}
\end{proposition}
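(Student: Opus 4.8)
The plan is to compute $\wgfvm{x}{s,t}$ straight from its definition and the product form of the weight function, and then to recognize the resulting single $B$-integral as the $A/B$-integral of a product of two $\check\phi$'s on the right-hand side. First I would insert the second expression for the weight function from \eqref{gfvwf}, namely $\gfvm{x}{s+b}=\phi(s+b-x)\phi(-s-b-x)/(\phi(0)^2\gauss{s+b}\gauss{x})$, into the defining integral $\wgfvm{x}{s,t}=\int_B\gfvm{x}{s+b}\fourier{t}{b}\operatorname{d}\!b$. The only $b$-dependence in the prefactor sits in $\gauss{s+b}$, which by the cocycle property $\gauss{s+b}=\gauss{s}\gauss{b}\fourier{s}{b}$ splits off the $b$-free factor $\gauss{s}$ and leaves the character $\gauss{b}\fourier{s}{b}$.

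The key structural input is that $\gauss{\cdot}$ restricts to a genuine character on $B$. Indeed, since $B=B^\perp$ we have $\fourier{b}{b'}=1$ for all $b,b'\in B$, so the cocycle relation forces $\gauss{b+b'}=\gauss{b}\gauss{b'}$ on $B$; hence $\gauss{\cdot}|_B\in\hat B$, and by surjectivity of $\gri\colon A\to\hat B$ there is an element $\varepsilon\in A$ with $\gauss{b}=\fourier{\varepsilon}{b}$ for all $b\in B$. Using this, $\gauss{b}\fourier{s}{b}=\fourier{\varepsilon+s}{b}$, so that $\fourier{t}{b}/\fourier{\varepsilon+s}{b}=\fourier{t-s-\varepsilon}{b}$ and the transform reduces to
\begin{equation}
\wgfvm{x}{s,t}=\frac{1}{\phi(0)^2\gauss{s}\gauss{x}}\int_B\phi(s-x+b)\phi(-s-x-b)\,\fourier{t-s-\varepsilon}{b}\operatorname{d}\!b.
\end{equation}
This turns the claim into the matching of the displayed $B$-integral against the asserted integral of Weil transforms.

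Finally I would expand the two factors on the right, writing $\check\phi(s-x,y+t-s-\varepsilon)=\int_B\phi(s-x+b)\fourier{y+t-s-\varepsilon}{b}\operatorname{d}\!b$ and $\check\phi(-s-x,y)=\int_B\phi(-s-x+b')\fourier{y}{b'}\operatorname{d}\!b'$, substitute into $\int_{A/B}(\cdots)\operatorname{d}\!y$, and interchange the order of integration. The inner $A/B$-integral is $\int_{A/B}\fourier{y}{b+b'}\operatorname{d}\!y$, which under the isomorphism $A/B\simeq\hat B$ is the delta distribution $\delta_B(b+b')$ by Fourier orthogonality; this collapses $b'=-b$ and reproduces exactly the $B$-integral above with character $\fourier{t-s-\varepsilon}{b}$, whence the two sides agree. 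I expect the main obstacle to be the measure bookkeeping rather than the algebra: one must check that the Haar measures on $B$ and $A/B$ are the ones compatible with the normalization $\int_{A^2}\fourier{x}{y}\operatorname{d}\!x\operatorname{d}\!y=1$ and with Weil's formula $\int_A=\int_{A/B}\int_B$, so that the $A/B$-integration yields $\delta_B$ with unit weight (equivalently, so that the inversion $\phi(x)=\int_{A/B}\check\phi(x,y)\operatorname{d}\!y$ holds). Once $\varepsilon$ is introduced and these normalizations are fixed, the manipulations are routine.
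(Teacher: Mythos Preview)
Your proposal is correct and follows essentially the same route as the paper: both arrive at the intermediate single $B$-integral
\[
\phi(0)^{2}\gauss{s}\gauss{x}\,\wgfvm{x}{s,t}=\int_B\phi(s-x+b)\phi(-s-x-b)\,\fourier{t-s-\varepsilon}{b}\operatorname{d}\!b
\]
via the cocycle identity and the observation that $\gauss{\cdot}|_B$ is a character represented by some $\varepsilon\in A$. The only cosmetic difference is in the last step: the paper replaces one factor $\phi(-s-x-b)$ by its inverse Weil transform $\int_{A/B}\check\phi(-s-x-b,y)\operatorname{d}\!y$, uses quasi $B$-invariance of $\check\phi$ to convert the $b$-shift into an extra character, and then recognizes the remaining $b$-integral as $\check\phi(s-x,y+t-s-\varepsilon)$; you instead expand both $\check\phi$'s on the right-hand side and collapse via Fourier orthogonality on $A/B\simeq\hat B$, which is the dual of the same manipulation.
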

\begin{proof}
We have
\begin{multline} 
\wgfvm{x}{s,t}=\int_B\gfvm{x}{s+b}\fourier{t}{b}\operatorname{d}\!b
=\int_B{\phi(s+b-x)\phi(-s-b-x)\over\phi(0)^{2} \gauss{s+b}\gauss{x}}\fourier{t}{b}\operatorname{d}\!b\\
={1\over \phi(0)^{2}\gauss{s}\gauss{x} }\int_B\phi(s+b-x)\phi(-s-b-x)\fourier{t-s-\varepsilon}{b}\operatorname{d}\!b
\end{multline}
where, in the second equality, we used the second formula in \eqref{gfvwf}, while in the last equality 
we used the fact that, due to the property $B=B^\perp$, the restriction of the Gaussian exponential of $A$ to $B$ is a character, i.e. 
there is an element $\varepsilon\in A$ such that
\begin{equation}
\fourier{\varepsilon}{b}=\gauss{b},\quad \forall b\in B.
\end{equation}
Inserting now the inverse Weil transform for one quantum dilogarithm and, using  its quasi $B$-invariance property, we absorb the $b$-integration into the Weil transformation of the other quantum dilogarithm:
\begin{multline} 
\phi(0)^{2}\gauss{s}\gauss{x} \wgfvm{x}{s,t}\\
=\int_{A/B\times B}\phi(s+b-x)\check\phi(-s-b-x,y)\fourier{t-s-\varepsilon}{b}\operatorname{d}(y,b)\\
=\int_{A/B\times B}\phi(s+b-x)\check\phi(-s-x,y)\fourier{y+t-s-\varepsilon}{b}\operatorname{d}(y,b)\\
=\int_{A/B}\check\phi(s-x,y+t-s-\varepsilon)\check\phi(-s-x,y)\operatorname{d}\!y.
\end{multline}
\end{proof}
\section{Examples}\label{sec4}
In this section, we describe few concrete examples of quantum dilogarithms.
\subsection{Faddeev's quantum dilogarithm}  The self-dual LCA group is $A=\R$ with the Gaussian exponential and the Fourier kernel
\begin{equation}
\gauss{x}=e^{\pi\imi x^2},\quad \fourier{x}{y}=e^{2\pi\imi xy},
\end{equation}
and the quantum dilogarithm $\phi(x)=\fad{\mathsf{b}}{x}$ is described in the Introduction. The Weil transformation, associated with the subgroup $B=\Z$, in this case is also called Weil--Gelfand--Zak (WGZ) transformation:
\begin{equation}
\check\phi(x,y)=\sum_{k\in\Z}\phi(x+k)e^{2\pi\imi ky}.
\end{equation}
It has been shown in \cite{AndersenKashaev2013} that $\check\phi(x,y)$ is a meromorphic function on $\C^2$. Choosing $\chi(x,y)=e^{\pi\imi xy}$, the IRF weight function is calculated by the formula
\begin{multline}
 \mfun(x,y,z)\\=\phi(0)^{-2}e^{\pi\imi (xy-x^2-z^2)}\int_0^1\check\phi(x-z,u+y-x-1/2)\check\phi(-x-z,u)\operatorname{d}\!u
\end{multline}
which, due to periodicity in $u$, can be thought of as an integral over a closed contour in the complex plane. That means that it is a meromorphic function on $\C^3$.
\subsubsection{The special value $\mathsf{b}=e^{\pi\imi/6}$} The result of meromorphicity of $\check\phi(x,y)$ obtained in \cite{AndersenKashaev2013} can be illustrated explicitly in the special case where parameter $\mathsf{b}$ is fixed
by the value $e^{\pi\imi/6}=(\sqrt{3}+\imi)/2$, corresponding to
\begin{equation}
q_\mathsf{b}=-\bar q_\mathsf{b}=\imi e^{-\pi\sqrt{3}/2}
\end{equation}
in \eqref{qdl}. That value is such that $\check\phi(x,y)$ can be calculated explicitly by using Ramanujan's $_1\psi_1$ summation formula:
\begin{equation}\label{1psi1}
\sum_{k\in\Z}{\poc{a;q}{k}􏰉\over\poc{b;q}{k}}z^k={\poc{q;q}{\infty}\poc{{b\over a};q}{\infty}\poc{az;q}{\infty}\poc{{q\over az};q}{\infty}
\over \poc{b;q}{\infty}\poc{{q\over a};q}{\infty}\poc{z;q}{\infty}\poc{{b\over az};q}{\infty}},\quad |b/a|<|z|<1.
\end{equation}
 Indeed, assuming $\left|e^{2\pi\imi x}\right|<\left|e^{-2\pi\imi y}\right|<1$,  using the formula
\begin{equation}
\poc{xq^n;q}{\infty}={\poc{x;q}{\infty}\over\poc{x;q}{n}},
\end{equation}
and the notation
\begin{equation}
\theta_q(x)\equiv \sum_{k\in\Z}q^{k^2} x^k=\poc{q^2;q^2}{\infty}\poc{-qx;q^2}{\infty}\poc{-q/x;q^2}{\infty},
\end{equation}
we have
\begin{multline}
 \check\phi(x,y)=\sum_{k\in\Z}{\poc{-q^{1+2k}_\mathsf{b}e^{2 \pi \mathsf{b} x};q^2_\mathsf{b}}{\infty}\over
\poc{q^{1+2k}_\mathsf{b}e^{2 \pi \bar{\mathsf{b}} x};q^2_\mathsf{b}}{\infty}}e^{-2\pi\imi yk}\\
={\poc{-q_\mathsf{b}e^{2 \pi \mathsf{b} x};q^2_\mathsf{b}}{\infty}\over\poc{q_\mathsf{b}e^{2 \pi \bar{\mathsf{b}} x};q^2_\mathsf{b}}{\infty}}\sum_{k\in\Z}{\poc{q_\mathsf{b}e^{2 \pi \bar{\mathsf{b}} x};q^2_\mathsf{b}}{k}\over\poc{-q_\mathsf{b}e^{2 \pi \mathsf{b} x};q^2_\mathsf{b}}{k}
}e^{-2\pi\imi yk}\\
=
{\poc{q_\mathsf{b}^2;q_\mathsf{b}^2}{\infty}\poc{-e^{2\pi\imi x};q_\mathsf{b}^2}{\infty}\poc{q_\mathsf{b}e^{2 \pi( \bar{\mathsf{b}} x-\imi y)};q_\mathsf{b}^2}{\infty}\poc{q_\mathsf{b}e^{2 \pi(\imi y- \bar{\mathsf{b}} x)};q_\mathsf{b}^2}{\infty}
\over\poc{q_\mathsf{b}e^{2 \pi \bar{\mathsf{b}} x};q^2_\mathsf{b}}{\infty} \poc{q_\mathsf{b}e^{-2 \pi \bar{\mathsf{b}} x};q_\mathsf{b}^2}{\infty}\poc{e^{-2\pi\imi y};q_\mathsf{b}^2}{\infty}\poc{-e^{2\pi\imi (x+y)};q_\mathsf{b}^2}{\infty}}\\
=
{\poc{q_\mathsf{b}^2;q_\mathsf{b}^2}{\infty}\poc{-e^{2\pi\imi x};q_\mathsf{b}^2}{\infty}\theta_{q_\mathsf{b}}\left(-e^{2 \pi(\bar{\mathsf{b}} x-\imi y)}\right)
\over\poc{e^{-2\pi\imi y};q_\mathsf{b}^2}{\infty}\poc{-e^{2\pi\imi (x+y)};q_\mathsf{b}^2}{\infty}\theta_{q_\mathsf{b}}\left(-e^{2 \pi\bar{\mathsf{b}} x}\right)},
\end{multline}
and the IRF weight function
\begin{multline}
\mfun(x,y,z)={\poc{q_\mathsf{b}^2;q_\mathsf{b}^2}{\infty}^2\poc{-e^{2\pi\imi (x-z)};q_\mathsf{b}^2}{\infty}\poc{-e^{-2\pi\imi (x+z)};q_\mathsf{b}^2}{\infty}\over e^{\pi\imi (x^2+z^2-xy)}\phi(0)^{2} \theta_{q_\mathsf{b}}\left(-e^{2 \pi\bar{\mathsf{b}} (x-z)}\right)\theta_{q_\mathsf{b}}\left(-e^{-2 \pi\bar{\mathsf{b}} (x+z)}\right)}\\
\times\int_0^1
{\theta_{q_\mathsf{b}}\left(e^{2 \pi\imi(\imi\bar{\mathsf{b}} (z-x)+x-y-u)}\right)\theta_{q_\mathsf{b}}\left(-e^{2 \pi\imi(\imi\bar{\mathsf{b}} (x+z)-u)}\right)\operatorname{d}\!u
\over\poc{-e^{2\pi\imi (x-y-u)};q_\mathsf{b}^2}{\infty}\poc{e^{2\pi\imi (u+y-z)};q_\mathsf{b}^2}{\infty}\poc{e^{-2\pi\imi u};q_\mathsf{b}^2}{\infty}\poc{-e^{2\pi\imi (u-x-z)};q_\mathsf{b}^2}{\infty}}.
\end{multline}
\subsection{A quantum dilogarithm over $A=\R\times\Z/N\Z$} Let $N$ be a positive integer. A generalization of the previous example is the self-dual LCA group $A=\R\times\Z/N\Z$ with the Gaussian exponential and the Fourier kernel
\begin{equation}
\gauss{(x,m)}=e^{\pi\imi x^2}e^{-\pi\imi m(m+N)/N},\quad \fourier{(x,m)}{(y,n)}=e^{2\pi\imi xy}e^{-2\pi\imi mn/N}.
\end{equation}
The quantum dilogarithm found in \cite{AndersenKashaev2014} is of the form
\begin{equation}
\phi(x,m)=\prod_{j=0}^{N -1}\fad{e^{\imi\theta}}{\frac{x}{\sqrt{N }}+\imi(1-N^{-1})\cos(\theta)-\imi e^{-\imi\theta}\frac{j}{N }-\imi e^{\imi\theta}\left\{\frac{j+m}{N }\right\}}
\end{equation}
where $\theta\in]0,\pi/2[$, and $\{x\}$ denotes the fractional part of $x$. The subgroup $B$ satisfying the condition $B=B^\perp$ is isomorphic to $\Z$ and is generated by the element $(1/\sqrt{N},1)$. The bi-character, satisfying the condition~\eqref{char} exists for odd $N$, and it reads as
\begin{equation}
\chi((x,m),(y,n))=e^{\pi\imi xy}e^{\pi\imi (1-N^{-1})mn}.
\end{equation}

\subsection{Self-dual LCA groups of the form $A=\hat B\times B$}\label{sec:ex}
In that case we write
\begin{equation}
x=(\hat x,\dot x)\in \hat B\times B,
\end{equation}
 and we choose
\begin{equation}
\gauss{x}=\hat x(\dot x),\quad \fourier{x}{y}=\hat x(\dot y)\hat y(\dot x),\quad \chi(x,y)=\hat y(\dot x).
\end{equation}
The Weil transformation of the generalized Faddeev--Volkov weight function reads
\begin{multline}
\wgfvm{z}{x,y}=\int_B\gfvm{z}{(\hat x,\dot x+b)}\hat y(b)\operatorname{d}\!b=\int_B\gfvm{z}{(\hat x,b)}\hat y(b-\dot x)\operatorname{d}\!b\\
=\bar\chi(x,y)\int_B\gfvm{z}{(\hat x,b)}\hat y(b)\operatorname{d}\!b,
\end{multline}
so that the IRF weight function, given by the formula
\begin{equation}\label{irfb^b}
\mfun(x,y,z)=\int_B\gfvm{z}{(\hat x,b)}\hat y(b)\operatorname{d}\!b
=\hat x(\dot z)\int_B{\phi((\hat x-\hat z,b-\dot z))\over\phi((\hat x+\hat z,b+\dot z))}\hat z(b)\hat y(b)\operatorname{d}\!b,
\end{equation}
is $B$-invariant in the first two arguments and thus there is no non-trivial gauge symmetry in the corresponding IRF lattice QFT-model. Notice that in this weight function the only variable $\dot z$ which takes its values in the subgroup $B$ enters as a part of the spectral parameter.
\subsubsection{The tropical quantum dilogarithm over $A=\T\times \Z $}
 Let us choose $A=\T\times \Z $ with
 \begin{equation}\label{gexptz}
\gauss{(z,m)}=z^m,\quad
\fourier{(z,m)}{(w,n)}=z^nw^m.
\end{equation}
The simplest known quantum dilogarithm in that case is given by the following `tropical' formula
\begin{equation}\label{tqdl}
\phi(z,m)=z^{(m)_+},\quad (m)_+\equiv\max(m,0),\quad \forall (z,m)\in\T\times\Z.
\end{equation}
For a function $f\in\C^{\T\times\Z}$, the Weil transformation associated to the subgroup $B=\Z$ reduces to taking the Fourier series over the second argument:
\begin{multline}
\check f((u,m),(v,n))=\sum_{k\in\Z}f(z,m+k)\fourier{v,n}{1,k}=\sum_{k\in\Z}f(u,m+k)v^k\\
 =v^{-m}\sum_{k\in\Z}f(u,k)v^k.
\end{multline}
In the case of the tropical quantum dilogarithm~\eqref{tqdl}, assuming that $|u|<|v|^{-1}<1$, we have
\begin{multline}
 v^m\check\phi((u,m),(v,n))=\sum_{k\in\Z}\phi(u,k)v^k=\sum_{k<0}v^k+\sum_{k\ge0}(uv)^k\\
 =\sum_{k\ge0}v^{-1-k}+\sum_{k\ge0}(uv)^k ={1\over v-1}+{1\over 1- uv}
 ={(u-1)v\over (v-1)(uv-1)}.
\end{multline}
Choosing $\chi((x,k),(y,l))=y^k$ and assuming that $\max(|x|,|y|)<r<\min(|z|,|xyz|)$,  we write for the IRF weight function  
\begin{multline}
\mfun((x,k),(y,l),(z,m))=y^k \wgfvm{(z,m)}{(x,k),(y,l)}\\
 ={y^k\over x^kz^m}\oint_{\T r/|y|}\check\phi((x/z,k-m),(uy/x,0))\check\phi((1/xz,-k-m),(u,0)){\operatorname{d}\!u\over 2\pi\imi u}\\
 ={y^k\over x^kz^m}\oint_{\T r/|y|}{(x/z-1)(uy/ x)^{1-k+m}\over(uy/x-1)(uy/z-1)}
 {(1/xz-1)u^{1+k+m}\over(u-1)(u/xz-1)}{\operatorname{d}\!u\over 2\pi\imi u}\\
  ={(x-z)(1-xz)\over 2\pi\imi}\oint_{\T r/|y|}{uy(u^2y/xz)^{m}\operatorname{d}\!u\over(uy-x)(uy-z)(u-1)(u-xz)}\\
  ={(x-z)y(1-xz)\over 2\pi\imi(xyz)^m}\oint_{\T r}{v^{2m+1}\operatorname{d}\!v\over (v-x)(v-y)(v-z)(v-xyz)}.
\end{multline}
The contour integral here can be evaluated easily by the method of residues, but the final formula is less compact.
The same weight function, by using formula~\eqref{irfb^b}, can also be written as a single sum over integers
\begin{equation}
\mfun((x,k),(y,l),(z,m))=x^{m}\sum_{n\in\Z}{\left(x/z\right)}^{(n-m)_+}(xz)^{-(n+m)_+}(zy)^n.
\end{equation}
\subsubsection{The DGG quantum dilogarithm over $\T\times\Z$} Taking the same Gaussian exponential as in \eqref{gexptz}, a  quantum dilogarithm over $\T\times\Z$ can be extracted from  the tetrahedron index of Dimofte--Gaiotto--Gukov \cite{MR3262519}. The explicit formula is as follows:
\begin{equation}
\phi(z,m)={\poc{-q^{1-m}z;q^2}{\infty}\over\poc{-q^{1-m} / z;q^2}{\infty}}
\end{equation}
where $q\in]-1,1[$ is a fixed parameter. The Weil transform of this function can be calculated explicitly by using Ramanujan's ${_1\psi_1}$ summation formula~\eqref{1psi1}.
Indeed, assuming that $|u|<|v|^{-1}<1$, we calculate
\begin{multline}
 v^m\check\phi((u,m),(v,n))=\sum_{k\in\Z}\phi_q(u,k)v^k=\sum_{k\in\Z}{\poc{-q^{1+k}u;q^2}{\infty}\over\poc{-q^{1+k}/u;q^2}{\infty}}v^{-k}\\
 =\sum_{k\in\Z}{\poc{-q^{1+2k}u;q^2}{\infty}\over\poc{-q^{1+2k}/u;q^2}{\infty}}v^{-2k}
 +\sum_{k\in\Z}{\poc{-q^{2k}u;q^2}{\infty}\over\poc{-q^{2k}/u;q^2}{\infty}}v^{1-2k}\\
 ={\poc{-qu;q^2}{\infty}\over\poc{-q/u;q^2}{\infty}}\sum_{k\in\Z}{\poc{-q/u;q^2}{k}\over\poc{-qu;q^2}{k}}v^{-2k}
 +{\poc{-u;q^2}{\infty}\over \poc{-1/u;q^2}{\infty}}\sum_{k\in\Z}{\poc{-1/u;q^2}{k}\over\poc{-u;q^2}{k}}v^{1-2k}\\
 ={\poc{q^2;q^2}{\infty}\poc{u^2;q^2}{\infty}\poc{-{q\over uv^2};q^2}{\infty}\poc{-quv^2;q^2}{\infty}
\over\poc{-q/u;q^2}{\infty}\poc{-qu;q^2}{\infty}\poc{1/v^2;q^2}{\infty}\poc{u^2v^2;q^2}{\infty}}\\
 +{v\poc{q^2;q^2}{\infty}\poc{u^2;q^2}{\infty}\poc{-1/uv^2;q^2}{\infty}\poc{-q^2uv^2;q^2}{\infty}
\over\poc{-1/u;q^2}{\infty} \poc{-q^2u;q^2}{\infty}\poc{1/v^2;q^2}{\infty}\poc{u^2v^2;q^2}{\infty}}\\
={\poc{q^2;q^2}{\infty}\poc{u^2;q^2}{\infty}\over\poc{1/v^2;q^2}{\infty}\poc{u^2v^2;q^2}{\infty}}\left({\theta_q(uv^2)
\over\theta_q(u)}+v{\theta_q(quv^2)
\over\theta_q(qu)}\right).
\end{multline}
Thus, assuming  $\max(|x|,|y|)<r<\min(|z|,|xyz|)$, the IRF weight function reads
\begin{multline}
 \mfun((x,k),(y,l),(z,m))=y^k \wgfvm{(z,m)}{(x,k),(y,l)}\\
 ={y^k\over x^kz^m}\oint_{\T r/|y|}\check\phi((x/z,k-m),(uy/x,0))\check\phi((1/xz,-k-m),(u,0)){\operatorname{d}\!u\over 2\pi\imi u}\\
 ={y^k\over x^kz^m}\oint_{\T r/|y|}
 {\poc{q^2;q^2}{\infty}\poc{x^2/ z^2;q^2}{\infty}
  \left({\theta_q(y^2u^2/xz)
\over\theta_q(x/ z)}+{yu\theta_q(qy^2u^2/xz)
\over x\theta_q(qx/z)}\right)
 \over\left(yu/x\right)^{k-m}\poc{x^2/y^2u^2;q^2}{\infty}\poc{y^2u^2/z^2;q^2}{\infty}}
\\
 \times u^{k+m}
{\poc{q^2;q^2}{\infty}\poc{1/x^2z^2;q^2}{\infty}\over\poc{1/u^2;q^2}{\infty}\poc{u^2/x^2z^2;q^2}{\infty}}\left({\theta_q(u^2/xz)
\over\theta_q(1/xz)}+u{\theta_q(qu^2/xz)
\over\theta_q(q/xz)}\right)
 {\operatorname{d}\!u\over 2\pi\imi u}\\
  ={\poc{q^2;q^2}{\infty}^2\poc{x^2/z^2;q^2}{\infty}\poc{1/x^2z^2;q^2}{\infty}\over 2\pi\imi (xz/y)^m}\\
 \times\oint_{\T r/|y|}{\left({\theta_q(y^2u^2/xz)
\over\theta_q(x/ z)}+{yu\theta_q(qy^2u^2/xz)
\over x\theta_q(qx/z)}\right)\left({\theta_q(u^2/xz)
\over\theta_q(1/xz)}+u{\theta_q(qu^2/xz)
\over\theta_q(q/xz)}\right)u^{2m-1}\operatorname{d}\!u\over\poc{x^2/y^2u^2;q^2}{\infty}\poc{u^2y^2/z^2;q^2}{\infty}\poc{1/u^2;q^2}{\infty}\poc{u^2/x^2z^2;q^2}{\infty}}\\
 ={\poc{q^2;q^2}{\infty}^2\poc{x^2/z^2;q^2}{\infty}\poc{1/x^2z^2;q^2}{\infty}\over 2\pi\imi (xyz)^m}\\
 \times\oint_{\T r}{\left({\theta_q(v^2/xz)
\over\theta_q(x/z)}+{v\theta_q(qv^2/xz)
\over x\theta_q(qx/z)}\right)\left({\theta_q(v^2/xy^2z)
\over\theta_q(1/xz)}+{v\theta_q(qv^2/xy^2z)
\over y\theta_q(q/xz)}\right)v^{2m-1}\operatorname{d}\!v\over \poc{x^2/v^2;q^2}{\infty}\poc{v^2/z^2;q^2}{\infty}\poc{y^2/v^2;q^2}{\infty}\poc{v^2/x^2y^2z^2;q^2}{\infty}}.
\end{multline}
\def\cprime{$'$} \def\cprime{$'$}

\end{document}